\theoremstyle{definition}
\newtheorem{Def}{Definition}
\newtheorem{definition}[Def]{Definition}
\newtheorem{Lemma}[Def]{Lemma}
\newtheorem{Remark}[Def]{Remark}
\newtheorem{example}[Def]{Example}
\newcommand{\Decomps}{\mathbb{D}} 
\newcommand{\Thanks}{\vspace*{.5em} \noindent \thanks}
\newcommand{\R}{\mathbb{R}}
\DeclareMathOperator{\Tr}{Tr}
\DeclareMathOperator{\norm}{| \hspace*{-0.1em}| }
\DeclareFontFamily{OT1}{rsfso}{}
\DeclareFontShape{OT1}{rsfso}{m}{n}{ <-7> rsfso5 <7-10> rsfso7 <10-> rsfso10}{}
\DeclareMathAlphabet{\mycal}{OT1}{rsfso}{m}{n}
\newcommand{\bei}{\begin{itemize}[itemsep=.2em]}
\newcommand{\eni}{\end{itemize}}
\tikzstyle{none}=[inner sep=0mm]
\tikzstyle{every loop}=[]
\tikzstyle{mark coordinate}=[inner sep=0pt,outer sep=0pt,minimum size=3pt,fill=black,circle]
\tikzset{arrow/.style={decoration={
    markings,
    mark=at position #1 with \arrow{>[length=2pt, width=3pt]}},
    postaction=decorate},
    reverse arrow/.style={decoration={
    markings,
    mark=at position #1 with {{\arrow{<[length=2pt, width=3pt]}}}},
    postaction=decorate}
}
\tikzstyle{upground}=[circuit ee IEC,thick,ground,rotate=90,scale=1.5]
\newcommand{\sys}[1]{\ensuremath{\mathbf{#1}}}
\newcommand{\sysS}{\sys{S}}
\newcommand{\caus}{\mathsf{caus}}
\newcommand{\eff}{\mathsf{eff}}
\newcommand{\TZdir}{T^{\overset{\rightarrow}{Z}}}
\newcommand{\exrep}{\ensuremath{\overline{R^{(x)}}}}
\newcommand{\eRerep}{\ensuremath{\overline{R^{\eff}}}}
\newcommand{\eRcrep}{\ensuremath{\overline{R^{\caus}}}}
\newcommand{\gxrep}{\ensuremath{R^{(x)}}}
\newcommand{\gerep}{\ensuremath{R^\eff}}
\newcommand{\gcrep}{\ensuremath{R^\caus}}
\newcommand{\uxrep}{\ensuremath{R^{(x)}_\mathsf{u}}}
\newcommand{\ccaus}{P^\mathsf{c}}
\newcommand{\ceff}{P^\mathsf{e}}
\newcommand{\concept}[1]{\mathbb{C}(#1)}
\newcommand{\beq}{\begin{equation}}
\newcommand{\eeq}{\end{equation}}
\newcommand{\Proof}{\begin{proof}}
\newcommand{\QED}{\end{proof} \noindent}
\newcounter{counter}
\newcommand{\Expcat}{\cat{Exp}}
\newcommand{\PExp}{\mathbb{PE}}
\newcommand{\Exp}{\mathbb{E}}
\newcommand{\Sys}{\cat{Sys}}
\newcommand{\BasicIIT}{\mathsf{IIT}}
\newcommand{\QShape}{\mathbb{Q}}
\newcommand{\cut}[2]{{#1}^{#2}}
\newcommand{\marg}[2]{{#1}|_{#2}} 
\newcommand{\notetoself}[1]{}
\newcommand{\omitfornow}[1]{}
\newcommand{\omitthis}[1]{}
\newcommand{\Class}{\cat{Class}}
\newcommand{\Classm}{\cat{Class}_\mathsf{m}} 
\newcommand{\FCStar}{\cat{CStar}}
\newcommand{\Quant}[1]{\cat{Quant}_{#1}}
\newcommand{\Split}[1]{\ensuremath{\mathrm{Split}}\xspace}
\newcommand{\cat}[1]{\ensuremath{\mathbf{#1}}\xspace}
\newcommand{\catC}{\cat{C}}
\newcommand{\Rplus}{\mathbb{R}^+} 
\newcommand{\id}[1]{\ensuremath{\mathrm{id}_{#1}}}
\newcommand{\Sub}{\mathrm{Sub}}
\newcommand{\discard}[1]{\ensuremath{\tinygroundnew_{#1}}}
\newcommand{\discardflip}[1]{\ensuremath{\tinygroundflipnew_{#1}}}
\DeclareFontFamily{U}{mathux}{\hyphenchar\font45}
\DeclareFontShape{U}{mathux}{m}{n}{
      <5> <6> <7> <8> <9> <10>
      <10.95> <12> <14.4> <17.28> <20.74> <24.88>
      mathux10
      }{}
\DeclareSymbolFont{mathux}{U}{mathux}{m}{n}
\DeclareMathSymbol{\bigovee}{1}{mathux}{"8F}
\DeclareMathSymbol{\bigperp}{1}{mathux}{"4E}
\DeclareFontFamily{U}{mathb}{\hyphenchar\font45}
\DeclareFontShape{U}{mathb}{m}{n}{
      <5> <6> <7> <8> <9> <10> gen * mathb
      <10.95> mathb10 <12> <14.4> <17.28> <20.74> <24.88> mathb12
      }{}
\DeclareSymbolFont{mathb}{U}{mathb}{m}{n}
\DeclareMathSymbol{\mylgroup}{\mathbin}{mathb}{'160}
\DeclareMathSymbol{\myrgroup}{\mathbin}{mathb}{'161}
\newcommand{\Stc}{\St_\mathsf{c}}
\newcommand{\St}{\text{St}}
\newcommand{\hilbH}{\mathcal{H}} 
\newcommand{\hilbK}{\mathcal{K}} 
\tikzstyle{cdot}=[circle, draw=black, fill=black!25, inner sep=.4ex] 
\tikzstyle{bigdot}=[dot, inner sep=0pt]
\tikzstyle{whitedot}=[circle, draw=black, fill=white, inner sep=.4ex]
\tikzstyle{greydot}=[circle, draw=black, fill=black!25, inner sep=.4ex] 
\tikzstyle{blackdot}=[circle, draw=black, fill=black, inner sep=.4ex]
\tikzset{arrow/.style={decoration={
    markings,
    mark=at position #1 with \arrow{>[length=2pt, width=3pt]}},
    postaction=decorate},
    reverse arrow/.style={decoration={
    markings,
    mark=at position #1 with {{\arrow{<[length=2pt, width=3pt]}}}},
    postaction=decorate}
}
\newcommand{\tinyswap}{
\smash{\raisebox{-1pt}{\hspace{-2pt}\ensuremath{
   \begin{pic}[scale=.25]
  \draw (0,-.5) to[out=80,in=-100] (1,.5);
  \draw (1,-.5) to[out=100,in=-80] (0,.5);
\end{pic}}}}
}
\newenvironment{pic}[1][] {\begin{aligned}\begin{tikzpicture}[scale=2.0, font=\tiny,#1]}{\end{tikzpicture}\end{aligned}} 
\newif\ifvflip\pgfkeys{/tikz/vflip/.is if=vflip}
\newif\ifhflip\pgfkeys{/tikz/hflip/.is if=hflip}
\newif\ifhvflip\pgfkeys{/tikz/hvflip/.is if=hvflip}
\newenvironment{picc}[1][]
{\begin{aligned}\begin{tikzpicture}[font=\tiny,#1]}
{\end{tikzpicture}\end{aligned}}
\newlength\minimummorphismwidth
\newlength\stateheight
\newlength\minimumstatewidth
\newlength\connectheight
\tikzset{colour/.initial=white}
\tikzstyle{pure}=[line width=.7pt]
\tikzstyle{none}=[inner sep=0mm]
\tikzstyle{every loop}=[]
\tikzstyle{mark coordinate}=[inner sep=0pt,outer sep=0pt,minimum size=3pt,fill=black,circle]
\tikzset{arrow/.style={decoration={
    markings,
    mark=at position #1 with \arrow{>[length=2pt, width=3pt]}},
    postaction=decorate},
    reverse arrow/.style={decoration={
    markings,
    mark=at position #1 with {{\arrow{<[length=2pt, width=3pt]}}}},
    postaction=decorate}
}
\tikzstyle{upground}=[circuit ee IEC,thick,ground,rotate=90,scale=1.5]
\tikzstyle{upgroundwhite}=[circuit ee IEC,thick,ground,rotate=90,scale=1.5, fill=white]
\tikzstyle{downground}=[circuit ee IEC,thick,ground,rotate=-90,scale=1.5]
\tikzstyle{downgroundnorm}=[circuit ee IEC,thick,ground,rotate=-90,scale=1.5, fill=white]
\newcommand{\mapminh}{5mm} 
\newcommand{\stateminh}{5mm}
\newcommand{\maplw}{0.7pt} 
\newcommand{\stateshift}{-0.2pt}
\newcommand{\effectshift}{-0.2pt}
\tikzstyle{box}=[map]
\tikzstyle{medium box}=[medium map]
\tikzstyle{dot}=[inner sep=0mm,minimum width=2mm,minimum height=2mm,draw,shape=circle]  
\tikzstyle{black dot}=[dot,fill=black]
\tikzstyle{white dot}=[dot,fill=white,,text depth=-0.2mm]
\tikzstyle{grey dot}=[dot,fill=black!25] 
\tikzstyle{corner1}=[box,fill=white, font=\footnotesize] %
\tikzstyle{corner2}=[dot,fill=white, font=\footnotesize] %
\tikzstyle{corner3}=[dot,fill=black!25, font=\footnotesize] %
\tikzstyle{corner4}=[dot,fill=black, font=\footnotesize] %
\tikzstyle{scalar}=[circle,draw,inner sep=2pt, line width=\maplw] 
\tikzset{stateshape/.style={append after command={
   \pgfextra
        \draw[sharp corners, fill=white, line width = \maplw]%
    (\tikzlastnode.west)%
    [rounded corners=0pt] |- (\tikzlastnode.north)%
    [rounded corners=0pt] -| (\tikzlastnode.east)%
    [rounded corners=5pt] |- (\tikzlastnode.south)%
    [rounded corners=5pt] -| (\tikzlastnode.west);
   \endpgfextra}}}
\tikzset{effectshape/.style={append after command={
   \pgfextra
        \draw[sharp corners, fill=white, line width = \maplw]%
    (\tikzlastnode.west)%
    [rounded corners=0pt] |- (\tikzlastnode.south)%
    [rounded corners=0pt] -| (\tikzlastnode.east)%
    [rounded corners=5pt] |- (\tikzlastnode.north)%
    [rounded corners=5pt] -| (\tikzlastnode.west);
   \endpgfextra}}}
 \tikzstyle{map}=[draw,shape=rectangle, inner sep=2pt,minimum height=\mapminh, minimum width=7mm,fill=white]
\tikzstyle{point}=[stateshape,inner sep=2pt, minimum width=6mm, minimum height=\stateminh, yshift=\stateshift]
\tikzstyle{copoint}=[effectshape,inner sep=.2pt, minimum width=6mm, minimum height=\stateminh, yshift=-\effectshift]
\tikzstyle{wide point}=[point, minimum width=12mm]
\tikzstyle{wide copoint}=[copoint, minimum width=12mm]
\tikzstyle{decomp}=[fill=white,draw,shape=isosceles triangle,shape border rotate=-90,isosceles triangle stretches=true,inner sep=0pt,minimum width=0.75cm,minimum height=4mm,yshift=-0.0mm]
\tikzstyle{decompwide}=[fill=white,draw,shape=isosceles triangle,shape border rotate=-90,isosceles triangle stretches=true,inner sep=0pt,minimum width=1.5cm,minimum height=4mm,yshift=-0.0mm]
\tikzstyle{decompflip}=[fill=white,draw,shape=isosceles triangle,shape border rotate=90,isosceles triangle stretches=true,inner sep=0pt,minimum width=0.75cm,minimum height=4mm,yshift=-0.0mm]
\tikzstyle{decompwideflip}=[fill=white,draw,shape=isosceles triangle,shape border rotate=90,isosceles triangle stretches=true,inner sep=0pt,minimum width=1.5cm,minimum height=4mm,yshift=-0.0mm]
 \tikzstyle{map}=[draw,shape=rectangle, inner sep=2pt,minimum height=\mapminh, minimum width=7mm,fill=white, line width = \maplw]
\tikzstyle{medium map} = [map, minimum width = 12mm] 
\tikzstyle{semilarge map} = [map, minimum width = 15mm] 
\tikzstyle{large map} = [map, minimum width = 18mm] 
\tikzstyle{kpoint} =[point]
\tikzstyle{kpointadj} =[copoint]
\tikzstyle{kpointconj}=[dagpointconj] 
\newcommand{\boxshape}[3]{%
\pgfdeclareshape{#1}{
\inheritsavedanchors[from=rectangle] 
\inheritanchorborder[from=rectangle]
\inheritanchor[from=rectangle]{center}
\inheritanchor[from=rectangle]{north}
\inheritanchor[from=rectangle]{south}
\inheritanchor[from=rectangle]{west}
\inheritanchor[from=rectangle]{east}
\backgroundpath{
\southwest \pgf@xa=\pgf@x \pgf@ya=\pgf@y
\northeast \pgf@xb=\pgf@x \pgf@yb=\pgf@y

\@tempdima=#2
\@tempdimb=#3

\pgfpathmoveto{\pgfpoint{\pgf@xa - 5pt + \@tempdima}{\pgf@ya}}
\pgfpathlineto{\pgfpoint{\pgf@xa - 5pt - \@tempdima}{\pgf@yb}}
\pgfpathlineto{\pgfpoint{\pgf@xb + 5pt + \@tempdimb}{\pgf@yb}}
\pgfpathlineto{\pgfpoint{\pgf@xb + 5pt - \@tempdimb}{\pgf@ya}}
\pgfpathlineto{\pgfpoint{\pgf@xa - 5pt + \@tempdima}{\pgf@ya}}
\pgfpathclose
}
}}
\tikzstyle{cloud}=[shape=cloud,draw,minimum width=1.5cm,minimum height=1.5cm]
\tikzstyle{dagmap}=[draw,shape=NEbox,inner sep=2pt,minimum height=\mapminh,fill=white, line width = \maplw] %
\tikzstyle{dashedmap}=[draw,dashed,shape=NEbox,inner sep=2pt,minimum height=\mapminh,fill=white, line width = \maplw]
\tikzstyle{mapdag}=[draw,shape=SEbox,inner sep=2pt,minimum height=\mapminh,fill=white, line width = \maplw]
\tikzstyle{mapadj}=[draw,shape=SEbox,inner sep=2pt,minimum height=\mapminh,fill=white, line width = \maplw]
\tikzstyle{maptrans}=[draw,shape=SWbox,inner sep=2pt,minimum height=\mapminh,fill=white, line width = \maplw]
\tikzstyle{mapconj}=[draw,shape=NWbox,inner sep=2pt,minimum height=\mapminh,fill=white, line width = \maplw]
\tikzstyle{medium dagmap}=[draw,shape=NEbox,inner sep=2pt,minimum height=\mapminh,fill=white,minimum width=7mm, line width = \maplw]
\tikzstyle{semilarge dagmap}=[draw,shape=NEbox,inner sep=2pt,minimum height=\mapminh,fill=white,minimum width=9.5mm, line width = \maplw]
\tikzstyle{large dagmap}=[draw,shape=NEbox,inner sep=2pt,minimum height=\mapminh,fill=white,minimum width=12mm, line width = \maplw]
\pgfmathsetmacro{\pgf@shorten@left}{\pgfkeysvalueof{/tikz/shorten left}}
\pgfmathsetmacro{\pgf@shorten@right}{\pgfkeysvalueof{/tikz/shorten right}}
\pgfmathsetmacro{\pgf@shorten@left}{\pgfkeysvalueof{/tikz/shorten left}}
\pgfmathsetmacro{\pgf@shorten@right}{\pgfkeysvalueof{/tikz/shorten right}}
\tikzstyle{dagpoint common}=[draw,fill=white,inner sep=1pt, line width = \maplw, minimum height = 4mm, yshift=1.2pt] 
\tikzstyle{dagpoint sc}=[shape=cornerpoint,dagpoint common]
\tikzstyle{dagpoint adjoint sc}=[shape=cornercopoint,dagpoint common]
\tikzstyle{dagpoint}=[shape=cornerpoint,shorten left=4pt,dagpoint common]
\tikzstyle{dagpointadj}=[shape=cornercopoint,shorten left=5pt,dagpoint common]
\tikzstyle{dagpointconj}=[shape=cornerpoint,shorten right=5pt,dagpoint common]
\tikzstyle{dagpointtrans}=[shape=cornercopoint,shorten right=5pt,dagpoint common]
\tikzstyle{dagpointsymm}=[shape=cornerpoint,shorten left=5pt,shorten right=5pt,dagpoint common]
\tikzstyle{widedagpoint}=[dagpoint, minimum width=1 cm, inner sep=2pt]
\tikzstyle{widedagpointadj}=[dagpointadj, minimum width=1 cm, inner sep=2pt]
\tikzstyle{every picture}=[baseline=-0.25em,scale=0.5]
\tikzstyle{label}=[font=\footnotesize,text height=1ex, text depth=0.15ex]
\tikzset{
sidetriangle/.style = {regular polygon, regular polygon sides = 3, aspect = 1, shape border rotate = 90, draw, inner sep = 0, minimum width = 1.2cm}
}
\tikzset{
isoc/.style = {shape=isosceles triangle, shape border rotate = 180, isosceles triangle stretches = true, minimum width = 1.2cm, minimum height= 1.5cm, inner sep = 0.3}}
\tikzset{
coarse/.style = {shape = circle, fill = white, draw, inner sep = 0, minimum width =0.125cm}
}
\tikzset{
coarsesymbol/.style = {shape = circle, fill = white, inner sep = -0.7, minimum width = 0.125cm}
}
\tikzstyle{sidetriangle2}=[sidetriangle, minimum width = 2cm, fill=white]
\tikzstyle{sideisocsmall}]=[style=isoc, minimum width = 1cm, minimum height = 0.8cm, draw, fill=white, font=\Large]
\tikzstyle{sideisoc}]=[style=isoc, minimum width = 2cm, draw, fill=white, font=\Large]
\tikzstyle{sideisocmid}]=[style=isoc, minimum width = 2.5cm, draw, fill=white, font=\Large]
\tikzstyle{sideisocmedium}]=[style=isoc, minimum width = 3cm, draw, fill=white, font=\Large]
\newcommand{\tinygroundnew}{
\smash{
{\hspace{-3pt}
\ensuremath{
\begin{picc}[scale=1.0] 
    \node[upground, xscale=0.8, yscale=0.7] (1) at (0,0.16) {};
    \draw (0,0.03) to (0,-0.25);
\end{picc}
}\hspace{-1pt}}}}
\newcommand{\tinygroundflipnew}{
\smash{
{\hspace{-3pt}
\ensuremath{
\begin{picc}[yscale=-1.0] 
    \node[upground, xscale=0.8, yscale=-0.7] (1) at (0,0.10) {};
    \draw (0,-0.03) to (0,-0.31);
\end{picc}
}\hspace{-1pt}}}}
\title{Integrated Information in Process Theories}
\author{Sean Tull}
\email{sean.tull@cambridgequantum.com}
\address{Cambridge Quantum Computing}
\author{Johannes Kleiner}
\email{johannes.kleiner@itp.uni-hannover.de }
\address{Munich Center for Mathematical Philosophy, University of Munich}
\begin{document}
\date{}
\begin{abstract}
We demonstrate how the key notions of Tononi et al.'s Integrated Information Theory (IIT) can be studied within the simple graphical language of process theories, i.e.~symmetric monoidal categories. This allows IIT to be generalised to a broad range of physical theories, including as a special case the Quantum IIT of Zanardi, Tomka and Venuti. 
\end{abstract}

\maketitle


In recent years, a toolkit for the study of integrated causal behaviours has been developed by Giulio Tononi and collaborators under the name of \emph{Integrated Information Theory (IIT)} \cite{tononi2004information,oizumi2014phenomenology}. Primarily proposed as a scientific theory of consciousness, the theory is based on the idea that consciousness originates from integrated, or `holistic', internal dynamics in the brain. More broadly, the methods of IIT have been applied to study integrated behaviour in simple information processing systems, including autonomy \cite{marshall2017causal}, causation \cite{albantakis2017caused}, and in the study of state differentiation \cite{marshall2016integrated}. 

 While the principles behind IIT appear to be quite general, it is typically only applied to simple, finite classical physical systems (often described as graphs of interacting `elements'). In the related article~\cite{GeneralisedIITs}, the present authors have shown that the core algorithm of IIT can be significantly extended, allowing one to formally define \emph{generalised IITs} based on very broad notions of physical systems.

 In this article we show how the key concepts of IIT, including those of \emph{systems}, \emph{integration} and \emph{causation}, can be studied naturally in the language of physical \emph{process theories}, which are mathematically described as \emph{symmetric monoidal categories}. Process theories come with an intuitive but rigorous graphical calculus \cite{selinger2011survey} which allows us to present many aspects of IIT pictorially.

 In particular, we show how to define a generalised IIT starting from any suitable process theory, allowing us to extend IIT to new physical settings. Choosing the theory of classical probabilistic processes essentially yields IIT 3.0 in the sense of~\cite{oizumi2014phenomenology}. Starting instead from the theory of \emph{quantum processes} gives a version of the \emph{Quantum Integrated Information Theory} defined by Zanardi, Tomka and Venuti~\cite{zanardi2018quantum}, a major motivation for this work.

 Here we only outline the use of the categorical perspective for theories such as IIT. There is much scope for future work developing a richer study of integration and causality in monoidal categories, as well as for modifying IIT itself into a more categorical form, as discussed in~\cite{GeneralisedIITs}.  

 The article is structured as follows. After introducing process theories in Section \ref{sec:PTs} we use them to describe the key notions from IIT -- decompositions of objects (Section \ref{sec:Decompositions}), systems (Section \ref{sec:systems}) and cause and effect repertoires (Section \ref{sec:cause-and-effect}). We summarise how to define a generalised IIT from a process theory in Section \ref{sec:generalised-IITs} before giving examples in Section \ref{sec:examples} and discussing future work in Section \ref{sec:discussion}. The appendix contains some initial steps in developing a study of integration in monoidal categories.

\section{Process Theories} \label{sec:PTs}

We begin by introducing the framework of \emph{process theories} used throughout this work; for more detailed introductions we refer to \cite{coecke2010categories,coecke2017picturing}. The basic ingredients of such a theory are \emph{objects} and \emph{processes} between them. We depict a process from the object $A$ to the object $B$ as a box:
\[
\scalebox{1.0}{\begin{tikzpicture}
	\begin{pgfonlayer}{nodelayer}
		\node [style=none] (0) at (4, 1) {};
		\node [style=map] (1) at (4, -0) {$f$};
		\node [style=label] (2) at (4, -1.5) {$A$};
		\node [style=label] (3) at (4, 1.5) {$B$};
		\node [style=none] (4) at (4, -1) {};
	\end{pgfonlayer}
	\begin{pgfonlayer}{edgelayer}
		\draw [style=none] (4.center) to (0.center);
	\end{pgfonlayer}
\end{tikzpicture}}
\]
These processes may be \emph{composed} together to form new ones in several ways. Firstly, given a process such as $f$ above, and any other process $g$ from $B$ to $C$, we may compose them `in sequence'
to form a new one from $A$ to $C$, denoted:
\[
\scalebox{1.0}{\input{./figures/composition-morphism.tikz}}
\]
Secondly, we may compose processes in parallel. Any two objects $A, B$ may be combined into a single object $A \otimes B$. Moreover any processes $f$ from $A$ to $B$, and $g$ from $C$ to $D$ may be placed `side-by-side' to form a new process:
\[
\scalebox{1.0}{\input{./figures/tensor-morphism.tikz}}
\]
from $A \otimes C$ to $B \otimes D$. More generally, by combining these operations, many processes may all be plugged together to form more complex diagrams describing a single composite process.

As a convenience, any process theory is taken to come with the following. Firstly, any object $A$ come with an \emph{identity process}, depicted as a blank wire on $A$, which `does nothing' in that composing with it via $\circ$ leaves any process as it is. Secondly, it has a \emph{trivial object}, denoted $I$, which leaves objects alone when combining under $\otimes$. We depict $I$ as empty space:
\[
\scalebox{1.0}{\input{./figures/trivial-object-stuff.tikz}}
\]
Finally, we formally assume the presence of a special process $\tinyswap$ which allows us to `swap' any pair of wires over each other, along with a set of rules saying roughly that diagrams in the above sense are well-defined.

Mathematically, all of this is summarised by saying that a process theory is precisely a \emph{symmetric monoidal category} $(\catC, \otimes, I)$ with the processes as its \emph{morphisms}. Our diagrammatic rules correspond to the precise \emph{graphical calculus} for reasoning in such categories~\cite{selinger2011survey}.

We will often wish to refer to some special kinds of processes. Processes with `no input' in diagrams (and so formally with input object $I$) are called \emph{states}, and can be thought of as `preparations' of the physical system given by their output object:
\[
\scalebox{1.0}{\begin{tikzpicture}
	\begin{pgfonlayer}{nodelayer}
		\node [style=point] (1) at (0, -0.25) {$\rho$};
		\node [style=none] (4) at (0, 1.25) {};
	\end{pgfonlayer}
	\begin{pgfonlayer}{edgelayer}
		\draw (4.center) to (1);
	\end{pgfonlayer}
\end{tikzpicture}
}
\]
Processes with no output, called \emph{effects}, may be thought of as `observations' we may record on our system. Finally, processes with neither input nor output are called \emph{scalars}.  It is common for theories to come with a \emph{probabilistic} interpretation meaning that each of their scalars $p$ correspond to a probability, or more generally an `unnormalised probability' $p \in \R^+$, with $r \otimes s = r \cdot s$ for scalars and the empty diagram given by $1$. In particular, the composition of a state with an effect
\[
\scalebox{1.0}{\begin{tikzpicture}
	\begin{pgfonlayer}{nodelayer}
		\node [style=point] (1) at (3, -0.25) {$\rho$};
		\node [style=copoint] (2) at (3, 1) {$e$};
		\node [style=none] (5) at (3, -0.25) {};
	\end{pgfonlayer}
	\begin{pgfonlayer}{edgelayer}
		\draw (2) to (5.center);
	\end{pgfonlayer}
\end{tikzpicture}
} \in \R^+
\]
corresponds to the `probability' of observing the effect $e$ in the state $\rho$. Such `generalised probabilistic theories' are a major focus of study in the foundations of physics~\cite{barrett2007information}. 

The theories we consider here will often come with further structure giving them a physical interpretation. Firstly, every object will come with a distinguished \emph{discarding} effect depicted
\[
\scalebox{1.0}{\begin{tikzpicture}
	\begin{pgfonlayer}{nodelayer}
		\node [style=none] (0) at (0, -0.75) {};
		\node [style=upground] (2) at (0, 0.25) {};
	\end{pgfonlayer}
	\begin{pgfonlayer}{edgelayer}
		\draw [style=none] (0.center) to (2);
	\end{pgfonlayer}
\end{tikzpicture}
}
\]
which we think of as the process of simply `throwing away' or `ignoring' a physical system. Similarly, every object should come with a distinguished \emph{completely mixed state} depicted as
\[
\scalebox{1.0}{\begin{tikzpicture}
	\begin{pgfonlayer}{nodelayer}
		\node [style=downground] (0) at (-3.5, -0.75) {};
		\node [style=none] (1) at (-3.5, 0.25) {};
	\end{pgfonlayer}
	\begin{pgfonlayer}{edgelayer}
		\draw [style=none] (1.center) to (0);
	\end{pgfonlayer}
\end{tikzpicture}
}
\]
which corresponds to preparing the object in a maximally `noisy' or `random' state. These processes should satisfy
\[
\scalebox{1.0}{\input{./figures/discard_axiom1i.tikz}}
\qquad
\scalebox{1.0}{\input{./figures/discardflip_axioms.tikz}}
\]
as well as 
\[
\scalebox{1.0}{\input{./figures/discard-I-extended.tikz}}
\qquad \qquad
\scalebox{1.0}{\input{./figures/discard-I-axiom.tikz}}
\]
for all objects $A, B$. We then define a process $f$ to be \emph{causal} when it satisfies
\[
\scalebox{1.0}{\input{./figures/causal.tikz}}
\]
or similarly as \emph{co-causal} if it preserves $\discardflip{}$. Discarding processes are in fact closely related to physical notions of causality; see for example~\cite{coecke2014terminality,chiribella2010probabilistic}. 

In such a probabilistic theory there is a unique process between any two objects, the \emph{zero process} $0$, such that composing any process via $\circ, \otimes$ with $0$ always yields $0$. 

At times we will assume our process theory also comes with a way of describing how similar any two causal states are. This amounts to a choice of \emph{distance function} on the set $\Stc(A)$ of causal states of each object $A$, providing a value
\[
d \left(
\scalebox{1.0}{\begin{tikzpicture}
	\begin{pgfonlayer}{nodelayer}
		\node [style=none] (0) at (1.25, 0.5) {};
		\node [style=point] (1) at (1.25, -0.5) {$a$};
		\node [style=label] (2) at (1.25, 1) {$A$};
	\end{pgfonlayer}
	\begin{pgfonlayer}{edgelayer}
		\draw (1) to (0.center);
	\end{pgfonlayer}
\end{tikzpicture}
}
\ , \ 
\scalebox{1.0}{\begin{tikzpicture}
	\begin{pgfonlayer}{nodelayer}
		\node [style=none] (0) at (1.25, 0.5) {};
		\node [style=point] (1) at (1.25, -0.5) {$b$};
		\node [style=label] (2) at (1.25, 1) {$A$};
	\end{pgfonlayer}
	\begin{pgfonlayer}{edgelayer}
		\draw (1) to (0.center);
	\end{pgfonlayer}
\end{tikzpicture}
}
\right) \in \mathbb{R}^+
\]
for each $a, b \in \Stc(A)$. Often this map $d$ will satisfy the axioms of a metric, but this is not required.


Our main examples of process theories will come with a notable extra feature, though this will not be necessary for our approach. In many theories it is possible to `reverse' any process, in that for any process $f$ there is another $f^\dagger$ in the opposite direction. We say a process theory has a \emph{dagger} when it comes with such a mapping 
\[
\scalebox{1.0}{\input{./figures/dagger-map.tikz}}
\]
which preserves composition and identity maps in an appropriate sense, and satisfies $f^{\dagger \dagger} = f$ for all $f$. The presence of a dagger is a common starting point in categorical approaches to quantum theory; see e.g. \cite{abramsky2004categorical,selinger2007dagger}.

Let us now meet our main examples of process theories with the above features.

\begin{example}[Classical probabilistic processes]
In the process theory $\Class$ of finite-dimensional probabilistic classical physics, the objects are finite sets $A, B, C, \dots$ and the processes $f$ from $A$ to $B$ are functions sending each element $a \in A$ to a `unnormalised probability distribution' over the elements of $B$, i.e~functions $f \colon A \times B \to \Rplus$. Composition of $f$ from $A$ to $B$ and $g$ from $B$ to $C$ is defined by
\[
(g \circ f)(a,c) = \sum_{b \in B} f(a,b) \cdot g(b,c)
\] 
In this theory the trivial object is the singleton set $I = \{\star\}$, with $\otimes$ given by the Cartesian product $A \times B$ and $(f \times g)(a,c)(b,d) = f(a,b) \cdot g(c,d)$. This theory is probabilistic, with scalars $r \in \Rplus$. 

Here $\discard{A}$ is the unique effect with $\discard{A}(a) = 1$ for all $a \in A$. A process $f$ is causal whenever it is stochastic, i.e.~sends each element $a \in A$ to a (normalised) probability distribution over the elements of $B$. Applying the process $\discard{}$ to some output wire of a process corresponds to \emph{marginalising} over the set which is discarded. 

States of an object are `$\Rplus$-distributions' over their elements, while causal states are normalised ones, i.e. probability distributions. The completely mixed state $\discardflip{A}$ is the uniform probability distribution, with $\discardflip{A}(a) = \frac{1}{|A|}$ for all $a \in A$. This theory also has a dagger by $f^\dagger(b,a) = f(a,b)$.

Rather than $\Class$ we will here work instead with the theory $\Classm$, defined in the same way, but with objects now being finite \emph{metric spaces} $(A,d)$. Each object $A$ now comes with a metric $d$ on its underlying set, with $A \otimes B = A \times B$ having the product metric. For each object $A$ we extend $d$ to a metric $d_W$ on probability distributions over $A$, i.e. causal states of $A$, called the \emph{Wasserstein metric} or \emph{Earth Mover's Distance} (EMD), definable e.g. by
\[
d_W(s,t) := \sup_f \{ \sum_{a \in A} f(a) \cdot s(a) - \sum_{a \in A} f(a)\cdot t(a) \}
\]
where the suprema is taken over all functions $f$ satisfying $| f(a) - f(b) | \leq d(a,b)$ for all $a, b$. $\Class$ itself may be given a metric on causal states in the same way by taking each object $A$ to have metric $d(a,b) = 1 - \delta_{a,b}$.
\end{example}

\begin{example}[Quantum Processes]
In the process theory $\Quant{}$ the objects are finite-dimensional complex Hilbert spaces $\hilbH, \hilbK, \dots$ and the processes from $\hilbH$ to $\hilbK$ are \emph{completely positive maps} $f \colon B(\hilbH) \to B(\hilbK)$ between their spaces of operators. Here $I= \mathbb{C}$ and $\otimes$ is the usual tensor product of Hilbert spaces and maps. States $\rho$ of an object $\hilbH$ may be identified with (unnormalised) \emph{density matrices}, i.e.~quantum states in the usual sense, as may effects. The effect $\discard{}$ sends each operator $a \in B(\hilbH)$ to its \emph{trace} $\Tr(a)$, and $\discardflip{}$ is the maximally mixed state on $\hilbH$, with density matrix $\frac{1}{\dim(\hilbH)} 1_{\hilbH}$. Here a process is causal precisely when it is trace-preserving, and the dagger is given by the Hermitian adjoint.
\end{example}

\begin{example}[Quantum-Classical Processes]
To combine $\Class$ and $\Quant{}$ we may use the theory $\FCStar$ whose objects are finite-dimensional \emph{C$^*$-algebras} $A, B, \dots$ and processes are completely positive maps $f \colon A \to B$, with $\otimes$ given by the standard tensor product, $I = \mathbb{C}$ and the dagger again by the Hermitian adjoint. Here $\discard{}$ sends each element $a \in A$ to its trace $\Tr(a) \in \mathbb{C}$, while $\discardflip{}$ corresponds to the rescaling $\frac{1}{d} 1$ of the element $1 \in A$, where $\Tr(1) = d$. Each C$^*$-algebra comes with a metric induced by its norm, providing a metric on states in the theory.

$\Class$ may be identified with the sub-theory of $\FCStar$ containing the commutative algebras, and $\Quant{}$ with those of the form $B(\hilbH)$ for some Hilbert space $\hilbH$. More general algebras are `quantum-classical', being given by direct sums of quantum algebras.
\end{example}

\section{Decompositions} \label{sec:Decompositions}

A central aspect of IIT is evaluating the level of integration of a process, and particularly of a state of some object. To do so we must compare the object in question against ways it may be \emph{decomposed}, as follows.

 Firstly, recall that a process $f$ from $A$ to $B$ is an \emph{isomorphism} when there is some (unique) $f^{-1}$ from $B$ to $A$ for which $f^{-1} \circ f$ and $f \circ f^{-1}$ are both identities. We write $A \simeq B$ when such an isomorphism exists.

\begin{definition}
In any process theory, a \emph{decomposition} of an object $S$ is a pair of objects $A, A'$ along with an isomorphism $S \simeq A \otimes A'$.

In a process theory with $\discard{}, \discardflip{}$ we will always consider decompositions whose isomorphisms are causal and co-causal.
\end{definition}

For short we often denote such a decomposition simply by $(A,A')$ and depict its isomorphism and inverse by 
\[
\scalebox{1.0}{\input{./figures/decomp-new.tikz}} \ \ , \ \ 
\scalebox{1.0}{\input{./figures/decompflip-new.tikz}}
\]
respectively. The fact that they form an isomorphism means that 
\[
\scalebox{1.0}{\input{./figures/decomp-conditions-nolabels.tikz}}
\]
One can go on to develop a general study of decompositions in process theories. Here we just note some of the basics, for more see Appendix~\ref{sec:appendix}.

Firstly, any decomposition has an induced \emph{complement} decomposition $(A,A')^\bot := (A',A)$, with isomorphism given by swapping its components:
 \[
\scalebox{1.0}{\input{./figures/decomp-swap.tikz}}
\]
All decompositions then satisfy $(A,A')^{\bot\bot}=(A,A')$. Moreover, any object always $S$ always comes with \emph{trivial decompositions} denoted $1 := (S,I)$ and $0 := (I,S)$ with $0 = 1^\bot$. Drawing either of their isomorphisms would just mean drawing a blank wire labelled by $S$. 

It is also useful to note when two decompositions of an object are `essentially the same'. We write $(A,A') \sim (B,B')$ and call both decompositions \emph{equivalent} when there exists isomorphisms $f, g$ with
\begin{equation} \label{eq:equiv-f-g}
\scalebox{1.0}{\input{./figures/decomp-equiv-new.tikz}}
\end{equation}
In a theory with $\discard{}, \discardflip{}$ we require moreover that $f, g$ are causal and co-causal.

We write $\Decomps(S)$ for the set of all equivalence classes of decompositions of $S$ under $\sim$. Often we  abuse notation and denote its members simply by $(A,A')$ instead of as equivalence classes $[(A,A')]_{\sim}$. It is easy to see that if two decompositions are equivalent then so are their complements, so that $(-)^\bot$ is well-defined on $\Decomps(S)$.

\begin{definition}
By a \emph{decomposition set} of an object $S$ in a process theory we mean a subset $\mathbb{D}$ of $\Decomps(S)$ containing $1$ and closed under $(-)^\bot$.
\end{definition}

Given any decomposition set $\mathbb{D}$ of $S$ and any $(A,A') \in \mathbb{D}$, we define the \emph{restriction} of $\mathbb{D}$ to $A$ via this decomposition to be the decomposition set
\[
\mathbb{D}|_A := 
\left\{
\scalebox{1.0}{\input{./figures/decABC.tikz}}
\large\mid 
\ \exists \scalebox{1.0}{\input{./figures/decACBdash.tikz}}
\text{ s.t. } \ 
\scalebox{1.0}{\input{./figures/decomp-large.tikz}}
\in \mathbb{D}
\right\}
\subseteq
\Decomps(A)
\]
Intuitively $\mathbb{D}|_A$ consists of all decompositions of $A$ which themselves can be extended to give a decomposition of $S$ belonging to $\mathbb{D}$, via $(A,A')$.

The most important examples of decomposition sets are the following.

\begin{example} \label{ex:simple-decomps}
Let $S$ be an object with a given isomorphism
\[
S \simeq S_1 \otimes \dots \otimes S_n
\]
representing $S$ as finite tensor of objects $S_i$ which we may call \emph{elements}. This induces a decomposition set $\mathbb{D}$ of $S$ whose elements correspond to subsets $J$ of the elements. For any such subset, defining $S_J := \bigotimes_{J} S_j$ we have a decomposition $S \simeq S_J \otimes S_{J'}$ where $J'$ is  the set of remaining elements. Then $\mathbb{D}|_{S_J}$ contains a decomposition for each $K \subseteq J$ in the same way.
\end{example}

Decomposition sets in terms of elements as above are the only kinds appearing in classical or quantum IIT. However more general ones would allow us to treat physical systems which are decomposable but not into any finite set of `elementary' subsystems.

\section{Systems} \label{sec:systems}

We now begin by seeing how each of the main components of IIT, or any `generalised IIT' in the sense of~\cite{GeneralisedIITs}, may be treated  starting from any given process theory $\catC$. The focus will be on a class of \emph{systems}, as follows.

\begin{definition}
By a \emph{system type} we mean a triple $\sysS = (S, \mathbb{D}, T)$ consisting of an object $S$ with a decomposition set $\mathbb{D}$ and a causal process
\[
\scalebox{1.0}{\input{./figures/time-evo.tikz}}
\]
which we call its \emph{time evolution}. A \emph{state} of $\sysS$ is simply a state of $S$ in $\catC$.
\end{definition}

The set $\mathbb{D}$ specifies the ways in which we will decompose our underlying system later when assessing integration. The process $T$ is intended to describe the way in which states of the system evolve over each single `time-step', via
\[
\scalebox{1.0}{\input{./figures/time-evo-states.tikz}}
\]
In what follows it will be useful to be able to restrict any state $s$ of our system to the components of any decomposition $(A,A') \in \mathbb{D}$ by setting 
\[
\scalebox{1.0}{\input{./figures/s-marginal-1.tikz}}
\]
and defining $s|_{A'}$ similarly.

A particular system of interest is the \emph{trivial system} $\sys{I}$ which has underlying object $I$, only a single decomposition $1=(I,I)=0$ in $\mathbb{D}$, and time evolution being the identity process. 

\subsection{Subsystems}

There are several operations one must carry out on systems in the contexts of IITs. The first is the taking of \emph{subsystems}. 

\begin{definition}
For each object $C$ belonging to some decomposition $(C,C') \in \mathbb{D}$, and each state $s$ of $\sysS$, the corresponding \emph{subsystem} of $\sysS$ is defined to be the system type $\sys{C}^s := (C, \mathbb{D}|_C, T|_C)$ with time evolution
\[
\scalebox{1.0}{\input{./figures/restricted-evolution-full.tikz}}
\] 
\end{definition}

The above definition of the restricted evolution $T|_C$ comes from~\cite{oizumi2014phenomenology} and is intended to capture the evolution of a state of $C$ conditioned on the state of $C'$ being the restriction $\marg{s}{C'}$ of $s$.

\subsection{Cutting}

A second important operation involves removing (some or all) causal connections between the two different components of a decomposition of a system. For any system $\sysS=(S,\mathbb{D},T)$ and decomposition $(C,C') \in \mathbb{D}$, we should be able to form a new such \emph{cut} system of the form 
\[
\cut{\sysS}{(C,C')} = (S,\mathbb{D},\cut{T}{(C,C')})
\]
in which the new evolution $\cut{T}{(C,C')}$ should remove some influence between these two regions. 

The most straightforward form of cutting is a \emph{symmetric cut}, in which both components are fully disconnected from each other, with time evolution 
\begin{equation} \label{eq:cut-simple}
\scalebox{1.0}{\input{./figures/cut-evol-full.tikz}}
\end{equation}
(where the triangle above denotes $(C,C')^\bot$). However, other theories may use additional structure to carry out alternative notions of system cut, as we will see later. 

\section{Cause and Effect} \label{sec:cause-and-effect}

Central to any IIT is a notion of causal influence between any two possible subsystems of a system. These influences are captured in a pair of assignments called the \emph{cause repertoire} and \emph{effect repertoire} of the system. For our purposes it suffices to note that such cause and effect repertoires amount to specifying a pair of processes
\[
\scalebox{1.0}{\input{./figures/cause-rep.tikz}} \qquad , \qquad \scalebox{1.0}{\input{./figures/eff-rep.tikz}}
\]
for each pair of underlying objects $M, P$ of subsystems $\sys{M}, \sys{P}$ of $\sys{S}$ via some state $s$. In this setting $M$ is typically called the `mechanism' and $P$ the `purview', and the above processes should capture the way in which the current state $m$ of $M$ constrains the previous or next state of $P$, respectively. These constraints are captured by the pair of states of $P$ given by plugging in the state $m$:
\[
\scalebox{1.0}{\begin{tikzpicture}
	\begin{pgfonlayer}{nodelayer}
		\node [style=none] (1) at (-0.25, 0.75) {};
		\node [style=none] (2) at (-0.25, -0.75) {};
		\node [style=label] (4) at (-0.25, 1.25) {$M$};
		\node [style=point] (5) at (-0.25, -0.5) {$m$};
	\end{pgfonlayer}
	\begin{pgfonlayer}{edgelayer}
		\draw (2.center) to (1.center);
	\end{pgfonlayer}
\end{tikzpicture}
}
\qquad
\mapsto 
\qquad
\scalebox{1.0}{\input{./figures/cause-rep-state.tikz}} \qquad , \qquad \scalebox{1.0}{\input{./figures/eff-rep-state.tikz}}
\]
We will additionally require the processes $\caus, \eff$ to be \emph{weakly causal} in the sense that whenever the state $m$ is causal then each of the above states must either be causal or $0$.

\begin{example} \label{ex:naive-caus-reps}
For any process theory there is a simple choice of effect repertoire, given by
\begin{equation} \label{eq:eff-rep-naive}
\scalebox{1.0}{\input{./figures/e-rep-naive-decomps.tikz}}
\end{equation}
where $T$ is the time evolution of the system. If our process theory has a dagger there is a similar straightforward choice of cause repertoire given by 
\begin{equation} \label{eq:caus-rep-naive}
\scalebox{1.0}{\input{./figures/c-rep-naive-decomps.tikz}}
\end{equation}
though this may not be weakly causal in our above sense if $T^\dagger$ is not causal. In a probabilistic process theory we should instead have that  
\begin{equation} \label{eq:caus-rep-naive-normalised}
\scalebox{1.0}{\input{./figures/c-rep-naive-decomps-normalised.tikz}}
\end{equation}
where $\lambda_m$ is the unique \emph{normalisation} scalar for the right-hand state, making it causal if it is non-zero (and being zero otherwise). It is not in general possible to define a process $\caus$ in terms of its action on states $m$ in this way, but this is possible for example in $\Class$, $\Quant{}$ or $\FCStar$.
\end{example}

However the cause or effect repertoires are specified, we will need to compare their values in a fixed state while varying $P$. To do so, for each state $s$ of $\sys{S}$ and each such $M, P$ we define the \emph{cause repertoire at $s$} to be the state of $S$ given by 
\begin{equation} \label{eq:caus-rep-value}
\scalebox{1.0}{\input{./figures/extended-cause-value-full.tikz}}
\end{equation}
The features of this diagram have special names in \cite{oizumi2014phenomenology}; the right-hand $\caus$ state above, given by taking mechanism $M = I$, is called the \emph{unconstrained} cause repertoire, and the whole process above $s|_M$ in the diagram is called the \emph{extended cause repertoire} at $M, P$. Defining them in this way allows us to compare the repertoire values for varying $M, P$.

Similarly, $\eff_s(M,P)$, the \emph{effect repertoire at $s$}, and the \emph{unconstrained} and \emph{extended effect repertoire} are all defined in terms of $\eff$ in the same way.

\subsection{Decomposing repertoires} \label{subsec:decomp-rep}

In an IIT we must assess how integrated each of these repertoire values are at a given state
. This involves comparing the repertoires with how they behave under decomposing each of $M$ and $P$. For any decompositions $(M_1, M2) \in \mathbb{D}|_M$ of $M$ and $(P_1, P_2) \in \mathbb{D}|_P$ of $P$, the \emph{decomposed} cause repertoire process is defined by
\begin{equation} \label{eq:decomp-caus}
\scalebox{1.0}{\input{./figures/caus-decompose.tikz}}
\end{equation}
We then define the state $\caus^{P_1,P_2}_{s,M_1,M_2}(M,P)$ just like \eqref{eq:caus-rep-value} but replacing $\caus$ with the process \eqref{eq:decomp-caus}. We decompose the effect repertoire in just the same way in terms of $\eff$.

\section{Generalised IITs} \label{sec:generalised-IITs}

In summary, let $\catC$ be a process theory coming with the features $\discard{}, \discardflip{}, d$ of Section \ref{sec:PTs}. To define an integrated information theory we must specify:
\begin{enumerate}
\item \label{enum:sys}
	a class $\Sys$ of system types, containing $\sys{I}$ and closed under taking subsystems;

\item \label{enum:cuts}
    a definition of system cuts, under which $\Sys$ is closed;
\item \label{enum:CE}
    a choice of weakly causal processes $\caus, \eff$ between the underlying objects $M, P$ of each pair of subsystems $\sys{M}, \sys{P}$ via some state $s$, of any system $\sys{S}$.
\end{enumerate}

More precisely, this provides the \emph{data} of a generalised integrated information theory in the sense of~\cite{GeneralisedIITs}.  From this data we may now use the \emph{IIT algorithm} from~\cite{oizumi2014phenomenology} to calculate the usual objects of interest in IIT.

\subsection{The IIT Algorithm}
We now briefly summarise this algorithm as treated in the general setting in~\cite{GeneralisedIITs}, to which we refer for more details. Let us fix a `current' state $s$ of a system $\sys{S}$. 
Firstly, the level of \emph{integration} of each value of the cause repertoire is defined by 
\begin{equation} \label{eq:integration-calc}
\phi(\caus_s(M,P))
 \ := \ 
\min d(\caus_s(M,P) \ , \ \caus^{P_1,P_2}_{s,M_1,M_2}(M,P))
\end{equation}
where the minima is taken over all pairs of decompositions of $M, P$ which are not both trivial, i.e.~equal to $1$. \footnote{When $\caus_s(M,P) = 0$ we alternatively set $\phi = 0$.} The integration level $\phi(\eff_s(M,P))$ is defined similarly in terms of $\eff$. 

For each choice of mechanism $M$, its \emph{core cause} $P^c$ and \emph{core effect} $P^e$ are the purviews $P$ with maximal $\phi$ values for $\caus, \eff$ respectively. The minima of their corresponding $\phi$ values is then denoted by $\phi(M)$. We then associate to $M$ and object called its \emph{concept} $\concept{M}$, essentially defined as the triple
\[
(\caus_s(M,P^c),\eff_s(M,P^e), \phi(M))
\]
More precisely, in \cite{GeneralisedIITs}, $\concept{M}$ is given by the pair of above repertoire values with each `rescaled' by $\phi(M)$. 

The tuple $\QShape(s)$ of all these concepts, for varying $M$, is called the \emph{Q-shape} $\QShape(s)$ of the state $s$. The collection of all possible such tuples is denoted by $\Exp(\sys{S})$. The level of integration of $\QShape(s)$ is calculated similarly to~\eqref{eq:integration-calc} by considering all possible cuts of the system. The subsystem $\sys{M}$ of $\sys{S}$ whose Q-shape is itself found to be most integrated is called the \emph{major complex}. Rescaling this Q-shape $\QShape(\sys{M},s|_M)$ according to its level of integration, and using an embedding $\Exp(\sys{M}) \hookrightarrow \Exp(\sys{S})$ we finally obtain a new element $\Exp(s) \in \Exp(\sysS)$.

The claim of a generalised IIT with regards to consciousness is that $\Exp(\sysS)$ is the space of all possible conscious experiences of the system $\sysS$, and that $\Exp(s)$ is the particular experience attained when it is in the state $s$, with intensity $\Phi(s) := \norm \Exp(s) \norm$.

\begin{Remark}
Let us make explicit how the specification of \ref{enum:sys}, \ref{enum:cuts}, \ref{enum:CE} above provides the data of an IIT in the sense of~\cite{GeneralisedIITs}. 
The system class of the theory is $\Sys$, and $\caus_s(M,P), \eff_s(M,P)$ and their decompositions are as outlined in Section \ref{subsec:decomp-rep}. When $\catC$ is probabilistic and has distances $d(a,b)$ defined for \emph{arbitrary} states $a, b$ of an object $A$, we may define the space of \emph{proto-experiences} $\PExp(\sys{S})$ of a system $\sys{S}$ to be simply its set of states, with
\[
\left \| \scalebox{1.0}{\begin{tikzpicture}
	\begin{pgfonlayer}{nodelayer}
		\node [style=point] (5) at (-4, -0.25) {$s$};
		\node [style=none] (6) at (-4, 0.75) {};
	\end{pgfonlayer}
	\begin{pgfonlayer}{edgelayer}
		\draw (6.center) to (5);
	\end{pgfonlayer}
\end{tikzpicture}
} \! \right \| := \scalebox{1.0}{\begin{tikzpicture}
	\begin{pgfonlayer}{nodelayer}
		\node [style=point] (5) at (-4, -0.25) {$s$};
		\node [style=none] (6) at (-4, 0.75) {};
		\node [style=upground] (7) at (-4, 1) {};
	\end{pgfonlayer}
	\begin{pgfonlayer}{edgelayer}
		\draw (6.center) to (5);
	\end{pgfonlayer}
\end{tikzpicture}
}
\]
However, if $d$ is only defined on causal states, as in classical IIT, to follow the algorithm from~\cite{GeneralisedIITs} one must instead set $\PExp(\sys{S}) := \Stc(S) \times \mathbb{R}^+$ as in~\cite[Ex. 3]{GeneralisedIITs}. For either choice, for any subsystem $\sys{M}$ of $\sys{S}$ we obtain an embedding $\PExp(\sys{M}) \hookrightarrow \PExp(\sys{S})$ by composing alongside $\discardflip{M^\bot}$, and this can be seen to provide a further embedding $\Exp(\sys{M}) \hookrightarrow \Exp(\sys{S})$. 
\end{Remark}


\section{Examples} \label{sec:examples}

Let us now meet several examples of IITs defined from process theories.

\subsection{Generic IITs}

Let $\catC$ be any operational process theory coming with a dagger on processes. 
We define a generalised IIT denoted $\BasicIIT(\catC)$ by taking as systems all tuples $\sys{S} = (S,\mathbb{D}, T)$ of an object $S$ in $\catC$ along with a causal process $T$ and a decomposition set $\mathbb{D}$ induced by a single isomorphism $S \simeq \bigotimes^n_{i=1} S_i$ in terms of elements $S_i$, as in Example~\ref{ex:simple-decomps}. As before each partition of these elements gives a decomposition of $S$. We define system cuts to be symmetric as in~\eqref{eq:cut-simple} and the repertoires are defined in the straightforward sense of \eqref{eq:eff-rep-naive}, \eqref{eq:caus-rep-naive}.

\begin{Remark}
We can extend this example in to ways. Firstly we may allow systems $\sys{S}$ to come with arbitrary finite decomposition sets $\mathbb{D}$ of $S$. Secondly, we may extend the definition to theories without daggers by instead simply requiring each system $\sys{S}$ to come with a process $T^-$ describing `reversed time evolution', and then define the cause repertoire by replacing $T^\dagger$ with $T^-$. 
\end{Remark}

\subsection{Classical IIT}

The `classical' IIT version 3.0 of Tononi and collaborators~\cite{oizumi2014phenomenology} is built on the process theory $\Classm$. 
As such a toy model of the theory is provided by $\BasicIIT(\Class)$. However IIT 3.0 itself differs from this theory, using some more specific features of the process theories $\Class$ and $\Classm$ which we now describe.  

Firstly, note that in these classical process theories, for each object $A$, each element $a \in A$ corresponds to a unique state given by the point distribution at $a$, as well as a unique effect, namely the map sending $a$ to $1$ and all other elements of $A$ to $0$. We denote this state and effect both simply by $a$.\footnote{Typically these are the only kinds of `state' considered, e.g. in \cite{oizumi2014phenomenology} and even in our related article \cite{GeneralisedIITs}. In contrast here the term `state' would include all distributions over $A$, i.e. all states of the process theory $\Class$.}

Any process $f$ from $A$ to $B$ is determined entirely by its values on these special states and effects since we have   
\[
f(a,b) = \scalebox{1.0}{\input{./figures/map-on-values.tikz}}
\]
for all $a \in A, b \in B$.

Another special feature of these classical process theories is that each object $A$ comes with a distinguished \emph{copying} process from $A$ to $A \otimes \dots \otimes A$, for any number of copies of $A$, as well as a \emph{comparison} process in the opposite direction. We denote and define these respectively by the rules 
\[
\scalebox{1.0}{\input{./figures/copymaps-eq-a.tikz}}
\qquad \qquad
\scalebox{1.0}{\input{./figures/comparisonmaps-eq-a.tikz}}
\]
for all $a \in A$. Abstractly, these operations form a canonical commutative \emph{Frobenius algebra} on each object, and there is no such canonical algebra on each object in $\Quant{}$ due to the \emph{no-cloning} theorem \cite{coecke2013new}. 

We may now describe IIT 3.0 itself as follows. 

\subsubsection{Systems}

In this theory systems are defined similarly to $\BasicIIT(\Class)$, being given by a set $S$ given as a product $S \simeq \bigotimes^n_{i=1} S_i$ of `elements' $S_i$, along with a causal (i.e.~stochastic) evolution $T$ on $S$. Additionally in \cite{oizumi2014phenomenology} each evolution $T$ is required to satisfy the property of \emph{conditional independence}, which states that for all $s, t \in S$, with $t=(t_1,\dots, t_n)$ for some $t_i \in S_i$ we have 
\[
\scalebox{1.0}{\input{./figures/con-indep.tikz}}
\]
where for each element $S_i$ we define the process $T_i$ by
\[
\scalebox{1.0}{\input{./figures/Ti.tikz}}
\]
having depicted the isomorphism $S \simeq \bigotimes^n_{i=1} S_i$ by the triangle above. In other words, conditional independence states that the probabilities for the next state of each element $S_i$ are independent. Equivalently, $T$ must satisfy
\[
\scalebox{1.0}{\input{./figures/cond-indep-2.tikz}}
\]

\subsubsection{Cuts}
Rather than our earlier symmetric cuts, the system cuts used in IIT 3.0 are \emph{directional}. For any decomposition $(C,C')$ of $S$ with $C = \bigotimes_{j \in J} S_j$ for some subset of notes indexed by $J \subseteq \{ 1, \dots ,n \}$, we define the cut evolution $\cut{T}{(C,C')}$ using conditional independence by setting
\[
\scalebox{1.0}{\input{./figures/cut-T-1.tikz}}
\quad
:= 
\quad
\left(
\scalebox{1.0}{\input{./figures/cut-T-2.tikz}}
 \ (i \in J)
\ \
 ,
 \ \
\scalebox{1.0}{\input{./figures/cut-T-3.tikz}}
 \ (i \not \in J)
\right)
\]
In other words, in the cut system all causal connections $C \to C'$ are replaced by noise, while all those into $C$ remain intact. 

\subsubsection{Repertoires}

Let us now define the processes $\caus, \eff$ between a pair of objects $M$ and $P$, with $M = \bigotimes_{i=1}^k M_i$ and $P = \bigotimes_{j=1}^r P_j$ for some subsets $ \{M_1, \dots, M_k\}$ and $\{P_1, \dots, P_r\}$ of elements of the system. 

We begin with $\eff$. When $P$ is simply a single element $P_j$, $\eff$ is defined exactly as in \eqref{eq:eff-rep-naive}. For more general $P$ we define $\eff$ to again satisfy a form of conditional independence, so that
\[
\scalebox{1.0}{\input{./figures/erep-prod-simple-2.tikz}}
\]
for all $m \in M, p=(p_1,\dots,p_r) \in P$. Equivalently, we have that 
\[
\scalebox{1.0}{\input{./figures/erep-prod-new-2.tikz}}
\]
In a similar fashion, whenever $M$ is a single element $M_i$ we define $\caus$ from $M$ to $P$ as in \eqref{eq:caus-rep-naive-normalised}, while for more general $M$ we require that
\[
\scalebox{1.0}{\input{./figures/crep-prod-3-scalar.tikz}}
\]
for all $m =(m_1, \dots, m_k) \in M$ and $p \in P$, where $\lambda_m$ is the normalisation scalar making $\caus \circ m$ a causal state (probability distribution) if it is non-zero, or $\lambda_m = 0$ otherwise. Equivalently, this means that 
\[
\scalebox{1.0}{\input{./figures/crep-prod-new-2-scalar.tikz}}
\]
for each $m \in M$. This concludes the data of classical IIT.

\subsection{Quantum IIT}

Zanardi, Tomka and Venuti have proposed a quantum extension of classical IIT~\cite{zanardi2018quantum}. In fact it is comparatively much simpler to describe in our approach, being precisely the theory $\BasicIIT(\Quant{})$.

Explicitly, systems in this theory are given by finite-dimensional complex Hilbert spaces $\hilbH$ along with a given decomposition into elements $\hilbH \simeq \bigotimes^n_{i=1} \hilbH_i$ and a completely positive trace-preserving map $T$ on $B(\hilbH)$. States and repertoire values are given by density matrices $\rho$. In this theory each Q-shape $\QShape(\rho)$ may be encoded as a single positive semi-definite operator on the space $(\mathbb{C}^2)^{\otimes n} \otimes \mathbb{C}^2 \otimes \hilbH$, as discussed in~\cite{zanardi2018quantum}.

\subsection{Quantum-Classical IIT}
We may now define a version of \emph{quantum-classical IIT} as $\BasicIIT(\FCStar)$. This synthesizes quantum IIT with the toy version $\BasicIIT(\Class)$ of classical IIT, containing both kinds of systems. In future it would be desirable to synthesise quantum IIT with IIT 3.0 proper. Since the latter relies on the presence of copying maps, this may be achievable using the more general notion of a \emph{leak} on a C$^*$-algebra \cite{selby2017leaks}.


\section{Outlook} \label{sec:discussion}

In this article we have simply aimed to show how integrated information theory, and its generalisations to other domains of physics, may be studied categorically. There are many avenues for future work.

Firstly, we have so far made no requirements on the cause and effect repertoire processes $\caus$, $\eff$. To be fit for their name these processes should be required to satisfy axioms which ensure they have a causal interpretation, ideally determining them uniquely within any given process theory. Monoidal categories provide a natural setting for the study of causality, a major contemporary topic in the foundations of physics \cite{kissinger2017categorical}.

At a higher level, it seems natural for the class of systems $\Sys$ of a generalised IIT to itself form a category. The theory itself should then give a functor into another category $\Expcat$ of (spaces of) phenomenal experiences; a formalization of the latter is for example given in~\cite{GeneralisedIITs}. 

Making IIT functorial in this way will likely involve modifying it to be more natural from a categorical perspective. Indeed the IIT algorithm as currently stated is not even well-defined, for example relying on the unique existence of core purviews which are not guaranteed. Developing a useful notion of integration applicable to any monoidal category may help to resolve such problems. We make some first steps in this direction in the appendix.

\medskip

\Thanks {\small{\textsc{Acknowledgements:}}
We would like to thank the organizers and participants of the \emph{Workshop on Information Theory and Consciousness}
at the Centre for Mathematical Sciences of the University of Cambridge, of the \emph{Modelling Consciousness Workshop} in Dorfgastein
and of the \emph{Models of Consciousness Conference} at the Mathematical Institute of the University of Oxford for discussions on this topic.
Much of this work was carried out while Sean Tull was under the support of an EPSRC Doctoral Prize at the University of Oxford, from November 2018 to July 2019, and while Johannes Kleiner was under the support of postdoctoral funding at the Institute for Theoretical Physics of the Leibniz University of Hanover. We would like to thank both institutions.
}

\bibliographystyle{alpha}
\bibliography{iit.bib}


\appendix
\section{Decompositions and Integration} \label{sec:appendix}

Here we briefly mention a few further results about decompositions of objects in process theories; we leave a detailed study of their properties to future work. 

 Our earlier definition of $\mathbb{D}|_A$ was based on an idea of one decomposition as being `contained in' another. Let us make this precise.

\begin{definition}
Let $S$ be an object in a process theory and $(A,A')$, $(B,B')$ two decompositions. We write that $(A,A') \preceq (B,B')$ whenever there exists an object $C$ and decompositions $(A,C)$ of $B$ and $(B', C)$ of $A'$ such that 
\begin{equation} \label{eq:pre-order-new}
\scalebox{1.0}{\input{./figures/decomp-preorder-new.tikz}}
\end{equation}
\end{definition}
Intuitively, this states that $A$ is contained in $B$ (as is $B'$ within $A'$) in a way compatible with these decompositions.

\begin{Lemma} \label{lem:decomp-lemma}
Let $S$ be an object in a process theory. Then $\preceq$ forms a pre-order on the set of decompositions of $S$, with top element $1$ and bottom element $0$, and $(-)^\bot$as an involution.
\end{Lemma}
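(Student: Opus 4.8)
The plan is to verify the four asserted properties directly from the definition of $\preceq$, in each case exhibiting an explicit witness object $C$ together with the two required decompositions, and then checking that the compatibility diagram \eqref{eq:pre-order-new} holds. In every case the witnesses are assembled from the defining isomorphisms of the decompositions and the coherence isomorphisms (associators and symmetries $\tinyswap$) of the symmetric monoidal structure. In a theory equipped with $\discard{}, \discardflip{}$ one records additionally that tensors, composites and swaps of causal and co-causal isomorphisms are again causal and co-causal, so the side condition that all isomorphisms be causal and co-causal is automatically preserved by each construction below.

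First I would show $\preceq$ is a pre-order. For reflexivity $(A,A') \preceq (A,A')$, take $C = I$; the trivial decompositions $(A,I)$ of $A$ and $(A',I)$ of $A'$ satisfy \eqref{eq:pre-order-new} because all the maps involved collapse to identities. For transitivity, suppose $(A,A') \preceq (B,B')$ is witnessed by some $C_1$ (so $B \simeq A \otimes C_1$ and $A' \simeq B' \otimes C_1$) and $(B,B') \preceq (D,D')$ is witnessed by some $C_2$ (so $D \simeq B \otimes C_2$ and $B' \simeq D' \otimes C_2$). I would then take $C := C_1 \otimes C_2$ and compose the given isomorphisms, obtaining $D \simeq B \otimes C_2 \simeq A \otimes (C_1 \otimes C_2)$ and $A' \simeq B' \otimes C_1 \simeq D' \otimes (C_2 \otimes C_1) \simeq D' \otimes (C_1 \otimes C_2)$, the last step using the symmetry. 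Pasting the two instances of \eqref{eq:pre-order-new} and reassociating the resulting threefold tensor yields the required diagram for $(A,A') \preceq (D,D')$.

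Next I would identify the extremal elements and the involution. The top element is $1 = (S,I)$: the relation $(A,A') \preceq (S,I)$ is witnessed by $C := A'$, since the decomposition $(A,A')$ of $S$ is exactly the given one $S \simeq A \otimes A'$ while $(I,A')$ of $A'$ is trivial, so \eqref{eq:pre-order-new} reduces to the defining isomorphism of $(A,A')$. Dually, $0 = (I,S) \preceq (A,A')$ is witnessed by $C := A$, presenting $S \simeq A \otimes A' \simeq A' \otimes A$ via $\tinyswap$. For the involution I would check that $(-)^\bot$ reverses the order: if $(A,A') \preceq (B,B')$ is witnessed by $C$, then the \emph{same} $C$ witnesses $(B,B')^\bot = (B',B) \preceq (A',A) = (A,A')^\bot$, because the two defining isomorphisms $A' \simeq B' \otimes C$ and $B \simeq A \otimes C$ are precisely the ones demanded by the complemented relation, and \eqref{eq:pre-order-new} transforms into its complemented form under the swaps exchanging the two components. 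Combined with the already-noted identity $(A,A')^{\bot\bot} = (A,A')$, this shows $(-)^\bot$ is an order-reversing involution, consistent with $1^\bot = 0$.

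I expect the genuinely non-routine point to be verifying that the compatibility diagram \eqref{eq:pre-order-new} is preserved under these operations, and most delicately so in the transitivity step, where two such diagrams must be pasted and the threefold tensor $A \otimes C_1 \otimes C_2$ reassociated. In the graphical calculus this is a diagram deformation justified by the coherence theorem for symmetric monoidal categories, so it requires care rather than computation. By contrast, the extremal and involution cases reduce the diagram either to identities or to a single application of $\tinyswap$, and are essentially immediate once the witness $C$ is chosen correctly.
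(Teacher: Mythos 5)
Your proof is correct and follows essentially the same route as the paper's: the same witnesses throughout ($C=I$ with trivial decompositions for reflexivity, $C=A'$ for the top element $1$, the tensor $C_1 \otimes C_2$ of the two witnesses for transitivity, and the observation that the very same witness $C$ and decompositions serve the complemented relation to make $(-)^\bot$ order-reversing). The only cosmetic difference is that you establish $0 = (I,S) \preceq (A,A')$ directly with witness $C = A$ and a swap, whereas the paper deduces it by applying the order-reversing involution to $(A,A')^\bot \preceq 1$; both steps are immediate.
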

\begin{proof}
We always have $(A,A') \preceq (A,A')$ by taking $C=I$ and using the decompositions $1$ and $0$ on $A$ in~\eqref{eq:pre-order-new}. Similarly $(A,A') \preceq 1$ by taking $C=A'$. To see that $(-)^\bot$ is an involution, suppose that $(A,A') \preceq (B,B')$ as above. Then we have $(B,B')^\bot \preceq (A,A')^\bot$ since
\[
\scalebox{1.0}{\input{./figures/decomp-preorder-swaps2.tikz}}
\]
 Hence we always have $0 = 1^\bot \preceq (A,A')$ for all $(A, A')$. 
For transitivity, note that whenever $(A,A') \preceq (B,B') \preceq (C,C')$ via some respective objects $D, E$ then we have
 \[
\scalebox{1.0}{\input{./figures/decomp-trans-new2.tikz}}
 \]
 so that $(A,A') \preceq (C,C')$ via the above decompositions $(D \otimes E, C')$ of $A'$ and $(A, D \otimes E)$ of $C$.

\end{proof} 

Recall that in any category, a \emph{sub-object} of an object $A$ is an (isomorphism class of a) monomorphism $m \colon M \to A$. It is \emph{split} when $e \circ m = \id{M}$ for some $e$. The sub-objects of $A$ form a partial order $\Sub(A)$.

\begin{Lemma}
In any process theory with $\discard{}, \discardflip{}$, for any object $S$:
\begin{enumerate}
\item \label{enum:sub-ob}
Any decomposition $(A,A')$ of $S$ makes $A$ a split sub-object of $S$ via
\begin{equation} \label{eq:split-sub}
\scalebox{1.0}{\input{./figures/split-sub-2.tikz}}
 \ \ , \ \ 
\scalebox{1.0}{\input{./figures/split-sub-1.tikz}}
\end{equation}
Moreover if  $(A,A') \preceq (B,B')$ then $A \leq B$ in $\Sub(S)$.
\item \label{enum:partial-order}
 $\preceq$ restricts to a partial order $\leq$ on $\Decomps(S)$, again with top element $1$, bottom $0$ and involution $(-)^\bot$. 
\end{enumerate}
\end{Lemma}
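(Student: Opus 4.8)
The plan is to handle the two parts in turn, reducing part~\ref{enum:partial-order} to the antisymmetry of $\Sub(S)$ together with a triviality lemma for the ``complement object'' occurring in the definition of $\preceq$. For part~\ref{enum:sub-ob}, write $\alpha\colon S \simeq A\otimes A'$ for the decomposition isomorphism and read off from \eqref{eq:split-sub} the inclusion $m = \alpha^{-1}\circ(\id{A}\otimes\discardflip{A'})\colon A\to S$ (insert the completely mixed state on $A'$, then apply the inverse isomorphism) and the retraction $e = (\id{A}\otimes\discard{A'})\circ\alpha\colon S\to A$ (apply $\alpha$, then discard $A'$). First I would compute $e\circ m$: the isomorphisms cancel and one is left with $\id{A}\otimes(\discard{A'}\circ\discardflip{A'})$, which equals $\id{A}$ by the axiom $\discard{}\circ\discardflip{}=1$. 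Hence $e\circ m=\id{A}$, so $m$ is a split monomorphism and $A$ is a split sub-object of $S$. For the ``moreover'', suppose $(A,A')\preceq(B,B')$, witnessed by an object $C$ with decompositions $p\colon B\simeq A\otimes C$ and $q\colon A'\simeq B'\otimes C$ satisfying \eqref{eq:pre-order-new}. The map realising $A\leq B$ in $\Sub(S)$ should be the split inclusion $h = p^{-1}\circ(\id{A}\otimes\discardflip{C})\colon A\to B$ coming from $(A,C)$, and I would verify $m_B\circ h = m_A$ by a diagram chase: expand $m_A$ via $\discardflip{A'}$, rewrite $\discardflip{A'}=q^{-1}\circ(\discardflip{B'}\otimes\discardflip{C})$ using co-causality of $q$ and multiplicativity of $\discardflip{}$ over $\otimes$, and then apply \eqref{eq:pre-order-new} to recognise $m_B\circ h$. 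This gives $[m_A]\leq[m_B]$ in $\Sub(S)$.

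For part~\ref{enum:partial-order}, I would first check that $\preceq$ descends to the $\sim$-classes $\Decomps(S)$: given equivalences of decompositions, the witnessing isomorphisms can be absorbed into the data $C,p,q$ of a $\preceq$-relation, so the relation is independent of representatives. Reflexivity, transitivity, the top element $1$, the bottom element $0$, and the fact that $(-)^\bot$ is an order-reversing involution then transfer directly from Lemma~\ref{lem:decomp-lemma}. The only genuinely new content is antisymmetry: I must show that $(A,A')\preceq(B,B')$ and $(B,B')\preceq(A,A')$ force $(A,A')\sim(B,B')$.

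Here is the route I would take for antisymmetry. By part~\ref{enum:sub-ob} the two hypotheses give $A\leq B$ and $B\leq A$ in $\Sub(S)$; since $\Sub(S)$ is a partial order, the split inclusions $h\colon A\to B$ and $k\colon B\to A$ constructed above are mutually inverse isomorphisms. Now $h = p^{-1}\circ(\id{A}\otimes\discardflip{C})$ is an isomorphism and $p^{-1}$ is an isomorphism, so $\id{A}\otimes\discardflip{C}\colon A\to A\otimes C$ is an isomorphism; its retraction is $\id{A}\otimes\discard{C}$, so by uniqueness of inverses $\id{A}\otimes(\discardflip{C}\circ\discard{C})=\id{A\otimes C}$. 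Sandwiching this identity between $\discardflip{A}\otimes\id{C}$ and $\discard{A}\otimes\id{C}$ and using $\discard{A}\circ\discardflip{A}=1$ cancels the $A$-factor and yields $\discardflip{C}\circ\discard{C}=\id{C}$; together with $\discard{C}\circ\discardflip{C}=1$ this makes $\discardflip{C}$ and $\discard{C}$ mutually inverse, i.e.\ $C\simeq I$. Setting $C=I$ collapses the witnessing data of $(A,A')\preceq(B,B')$: the composite in \eqref{eq:pre-order-new} relating $\alpha$ and $\beta$ reduces to a tensor product $f\otimes g$ with $f=p^{-1}\colon A\to B$ and $g=q\colon A'\to B'$, which is exactly \eqref{eq:equiv-f-g}, with $f,g$ causal and co-causal since $p,q$ are. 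Hence $(A,A')\sim(B,B')$.

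The main obstacle is this antisymmetry step, and within it the triviality claim $C\simeq I$: one cannot cancel tensor factors in a general symmetric monoidal category, so it is essential to exploit that $h$ is an honest isomorphism and to use the discarding axioms $\discard{}\circ\discardflip{}=1$ to cancel the $A$-factor by sandwiching with $\discardflip{A}$ and $\discard{A}$. The diagram chase in part~\ref{enum:sub-ob} verifying $m_B\circ h = m_A$ from \eqref{eq:pre-order-new} is the other place requiring care, though it is routine given the multiplicativity of $\discard{}$ and $\discardflip{}$ over $\otimes$ and the co-causality of the decomposition isomorphisms.
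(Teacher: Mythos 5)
Your proof is correct, and for part \ref{enum:sub-ob} it coincides with the paper's: the same splitting computation $e \circ m = \id{A} \otimes (\discard{A'} \circ \discardflip{A'}) = \id{A}$, and the same factorisation $m_A = m_B \circ h$ of the inclusion for $A$ through the inclusion for $B$ to get the ``moreover'' clause. For part \ref{enum:partial-order} the architecture also matches: both arguments reduce the claim to showing that mutual $\preceq$-relations imply equivalence in the sense of \eqref{eq:equiv-f-g}, and both get their traction from part \ref{enum:sub-ob} via cancellation against the (split) monos $m_A, m_B$, which forces the mediating maps $h, k$ to be mutually inverse. Where you genuinely diverge is the finishing move. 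The paper, having obtained causal and co-causal isomorphisms $A \simeq B$ and (by the same argument applied to complements) $A' \simeq B'$, verifies the diagram \eqref{eq:equiv-f-g} directly with these maps. You instead extract from the invertibility of $h = p^{-1} \circ (\id{A} \otimes \discardflip{C})$ the stronger structural fact that the witnessing object is trivial: $\id{A} \otimes \discardflip{C}$ is an isomorphism with retraction $\id{A} \otimes \discard{C}$, so $\id{A} \otimes (\discardflip{C} \circ \discard{C}) = \id{A \otimes C}$, and sandwiching with $\discardflip{A} \otimes \id{C}$ and $\discard{A} \otimes \id{C}$ gives $\discardflip{C} \circ \discard{C} = \id{C}$, hence $C \simeq I$. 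This buys two things the paper's route does not make explicit: the equivalence \eqref{eq:equiv-f-g} then follows by a mechanical collapse of the witness diagram \eqref{eq:pre-order-new}, with the second isomorphism $A' \simeq B'$ coming for free as $q$ transported along $C \simeq I$ rather than needing a separate complement argument; and one learns that mutual containment forces the ``difference'' object between the two decompositions to be trivial, a mildly interesting fact in its own right. Two cosmetic caveats: ``setting $C = I$'' should be read as transporting $p, q$ along the isomorphism $\discard{C} \colon C \to I$, i.e.\ replacing $p$ by $(\id{A} \otimes \discard{C}) \circ p$ and $q$ by $(\id{B'} \otimes \discard{C}) \circ q$; and for the causality claim on the resulting $f, g$ you should note that $\discard{C}$ and $\discardflip{C}$ are themselves causal and co-causal (immediate from the discarding axioms), not just that $p, q$ are.
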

\begin{proof}
\ref{enum:sub-ob}: We have 
\[
\scalebox{1.0}{\input{./figures/split-sub-arg.tikz}}
\]
If $(A,A') \preceq (B,B')$ then the splitting for $A$ factors over that for $B$ since:
\[
\scalebox{1.0}{\input{./figures/split-sub-contained.tikz}}
\]
It follows that $A \leq B$ in $\Sub(S)$.

\ref{enum:partial-order}:
We need to show that any two decompositions $(A,A')$ and $(B,B')$ are equivalent under $\preceq$ precisely when they are equivalent in the sense of~\eqref{eq:equiv-f-g}. Firstly, if there exists causal and co-causal isomorphisms $f, g$ making \eqref{eq:equiv-f-g} hold,  then we have 
\[
\scalebox{1.0}{\input{./figures/decomp-transfer.tikz}}
\]
Viewing $f^{-1}$ and $g$ as decompositions $(A, I)$ of $B$ and $(I, B')$ of $A'$, respectively, this gives that $(B,B') \preceq (A,A')$. Then $(A, A') \preceq (B,B')$ holds similarly. 

Conversely, if $(A, A') \preceq (B,B') \preceq (A,A')$, via respective objects $C, D$ then
\[
\scalebox{1.0}{\input{./figures/decomp-more-proof.tikz}}
\]
Since the right-hand map is an epimorphism by the first part, this gives that
\[
\scalebox{1.0}{\input{./figures/decomp-more-proof-2.tikz}}
\]
Dually, composing in the other order gives the identity on $A$, making these causal and co-causal isomorphisms $A \simeq B$. Similarly we obtain such isomorphisms $A' \simeq B$'. Then we have
\[
\scalebox{1.0}{\input{./figures/decomp-equiv-proof.tikz}}
\]
as required.
Now \ref{enum:partial-order} follows since any pre-order restricts to a partial order on its set of equivalence classes, and so $\preceq$ becomes a partial order $\leq$ on $\Decomps(S)$. It is easy to see that the earlier properties of $1,0,(-)^\bot$ carry over to $\leq$. 
\end{proof}

\subsection{Integration}

Let us briefly allude to how integration may generally be studied and quantified using decomposition sets.

Suppose we have objects $S, S'$ with given decomposition sets $\mathbb{D}, \mathbb{D}'$ and for each $(A,A') \in \mathbb{D}$ and $(B,B') \in \mathbb{D}'$ a process $f^{B}_A$ from $A$ to $B$. We denote $f^{S'}_S$ simply by $f$. Whenever we have a given distance function $d$ on the set of processes from $S$ to $S'$,  we may define the level of \emph{integration} of the family $(f^B_A)_{A,B}$ as 
\[
\phi(f)
:=
\min_{\mathbb{D} \times \mathbb{D}'}
d\left(
\scalebox{1.0}{\begin{tikzpicture}
	\begin{pgfonlayer}{nodelayer}
		\node [style=none] (1) at (12, -1) {};
		\node [style=none] (3) at (12, 1) {};
		\node [style=label] (6) at (12, -1.5) {$S$};
		\node [style=map] (12) at (12, 0) {$f$};
		\node [style=label] (20) at (12, 1.5) {$S'$};
	\end{pgfonlayer}
	\begin{pgfonlayer}{edgelayer}
		\draw (3.center) to (1.center);
	\end{pgfonlayer}
\end{tikzpicture}
}
,
\scalebox{1.0}{\input{./figures/f-w-decomps.tikz}}
\right)
\]
where we exclude the top element $(1,1)$ of $\mathbb{D} \times \mathbb{D}'$ in the minimisation.

\begin{example}
Given any process $f$ from $S$ to $S'$ we may define such a family $(f^B_A)_{A,B}$ with $f^{S'}_S = f$ by setting
\[
\scalebox{1.0}{\input{./figures/f-restr-1.tikz}}
\]
\end{example}

\begin{example}
Our earlier description of the IIT algorithm precisely includes evaluating the integration level of each of the families of processes $(\caus)_{M,P}$ and $(\eff)_{M,P}$ using the state-dependent distance
\[
d_m \left(
\scalebox{1.0}{\input{./figures/fMP.tikz}},
\scalebox{1.0}{\input{./figures/gMP.tikz}}
\right)
:=
d \left(
\scalebox{1.0}{\input{./figures/f_m.tikz}},
\scalebox{1.0}{\input{./figures/g_m.tikz}}
\right)
\]
where $m = s|_M$ and $d$ is the distance on $\St(S)$.
\end{example}



\end{document}